\begin{document}

\newtheorem{definition}{Definition}[section]
\newtheorem{thm}{Theorem}[section]
\newtheorem{proposition}[thm]{Proposition}
\newtheorem{lemma}[thm]{Lemma}
\newtheorem{corollary}[thm]{Corollary}
\newtheorem{exam}{Example}[section]

\newtheorem{remark}{Remark}[section]

\newcommand{\La}{\mathbf{L}}
\newcommand{\h}{{\mathbf h}}
\newcommand{\Z}{{\mathbb Z}}
\newcommand{\R}{{\mathbb R}}
\newcommand{\C}{{\mathbb C}}
\newcommand{\D}{{\mathcal D}}
\newcommand{\F}{{\mathbf F}}
\newcommand{\HH}{{\mathbf H}}
\newcommand{\OO}{{\mathcal O}}
\newcommand{\G}{{\mathcal G}}
\newcommand{\A}{{\mathcal A}}
\newcommand{\B}{{\mathcal B}}
\newcommand{\I}{{\mathcal I}}
\newcommand{\E}{{\mathcal E}}
\newcommand{\PP}{{\mathcal P}}
\newcommand{\Q}{{\mathbb Q}}
\newcommand{\separ}{\,\vert\,}
\newcommand{\abs}[1]{\vert #1 \vert}
\newcommand{\mindet}[1]{\hbox{\rm det}_{min}\left( #1\right)}

\title{Diversity-multiplexing Gain Tradeoff: a Tool in Algebra?}
\author{
\authorblockN{Roope Vehkalahti, \textit{Member, IEEE} }
\authorblockA{Department of Mathematics \\
 University of Turku\\
Finland\\
Email: roiive@utu.fi}
\and
\authorblockN{Hsiao-feng (Francis) Lu,  \textit{Member, IEEE}}
\authorblockA{Department of Electronical Engineering \\
National Chiao Tung University\\
Hsinchu, Taiwan\\
Email:francis@cc.nctu.edu.tw }

}
\maketitle

\begin{abstract}
Since the invention of space-time coding numerous algebraic methods have been applied in code design.
 In particular algebraic number theory and central simple algebras have been on the forefront of the research.

In this paper we are turning the table and asking whether information theory can be used as a tool in algebra. We will first derive some corollaries from diversity-multiplexing gain (DMT) bounds by  Zheng  and Tse and later  show how these results can be used to analyze the unit group of orders of certain division algebras. The authors do not claim that the algebraic results are new, but we do find that this interesting relation between algebra and information theory is quite surprising and worth pointing out.

\end{abstract} 

\section{Introduction}
The performance of a lattice code in the Gaussian channel can  be reduced to the considerations of \emph{Hermite constant} and \emph{kissing number}.
In principle capacity results can be used to derive information of achievable Hermite constants and kissing numbers. However, for a given lattice
 in $\C^n$,  with  a given $n$, these results can not be expected to give, for example, tight bounds for Hermite constants.
 This is due to the asymptotic nature of  the classical ergodic capacity results. Performance of  codes with relatively small length is strictly bounded away from capacity.
 
 In the case of fading channels the situation is considerably different. In particular, codes with limited length can achieve the diversity-multiplexing tradeoff bounds. Therefore there is hope that results considering DMT can be transformed into non-trivial mathematical statements considering lattice codes with limited length.
 
 In this paper we are giving some examples how the information theoretic DMT-bounds can be turned into statements of spread of determinants in matrix lattices and how these mass formulas can then be used to analyze unit groups of \emph{orders} of $\Q(i)$-central division algebras.

\section{Basic definitions}
Let us now consider a slow fading channel where we have $n_t$ transmit and $n_r$  receiving antennas and  where the decoding delay is $T$ time units. 
The channel equation can be now written as
$$
Y=\sqrt{\frac{SNR}{n_t}}HX +N
$$
where $H \in M_{n_r \times n_t}(\C)$ is the channel matrix whose entries are independent identically
distributed (i.i.d.) zero-mean complex circular Gaussian
random variables with the variance 1, and $N\in M_{n_r \times T}(\C) $
is the noise matrix whose entries are i.i.d. zero-mean complex circular Gaussian random variables with the variance 1.
Here $X \in M_{n_t\times T}(\C)$ is the transmitted codeword and $SNR$ presents the signal to noise ratio.

In order to shorten the notation we denote $SNR$ with $\rho$.
Let us suppose we have coding scheme where for  each value of $\rho$ we have a code $C(\rho)$ having
$|C(\rho)|$ matrices in $M_{n \times T}(\C)$. The rate $R(\rho)$ is then $\log{(|C(\rho))|}/T$.
Let us suppose  that the scheme fulfills the  constraint  
\begin{equation}\label{energynorm}
\frac{1}{|C(\rho)|}\sum_{X \in C(\rho)} ||X||_F^2 \leq T n_t.
\end{equation}

We then have the following definition from \cite{ZT}.
\begin{definition}
The scheme is said to achieve \emph{spatial multiplexing gain} $r$ and \emph{diversity gain} $d$ if the data rate
$$
\lim_{\rho \to \infty} \frac{R(\rho)}{log(\rho)} = r
$$
and  the average error probability
$$
\lim_{\rho \to \infty} \frac{log(P_e(\rho))}{log(\rho)}=-d.
$$

\end{definition}

\begin{thm}[\cite{ZT}]\label{DMT}

Assume $T\geq m+n-1$. The  optimal tradeoff curve $d^*(r)$  is  achieved by the piecewise-linear function
connecting
$(r, d^*(r)), r=0,\dots,\mathrm{min}(n,m)$, where
$$
d^*(r)=(m-r)(n-r),
$$
and where $r$ is the multiplexing gain.

\end{thm}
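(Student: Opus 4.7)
The plan is to establish the claimed DMT curve through matching upper and lower bounds, following the strategy of Zheng and Tse. The converse bounds any scheme's error probability from below by the channel's outage probability, and the achievability proves a matching exponent using the i.i.d.\ complex Gaussian random codebook ensemble.

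For the converse, I would first argue via a Fano-type inequality that for any scheme with rate $R(\rho) = r\log\rho$ the error probability dominates, on the exponential scale, the outage probability $P_{out}(R(\rho))$ associated to the outage event $\{H : \log\det(I_n + (\rho/m)HH^\ast) < R(\rho)\}$. Writing the ordered eigenvalues of $HH^\ast$ as $\lambda_i = \rho^{-\alpha_i}$ and substituting into the Wishart joint density, the outage constraint becomes, up to sub-polynomial factors, $\sum_i (1-\alpha_i)^+ < r$. Applying Laplace's method to the resulting integral reduces the exponent of $P_{out}$ to a minimization of $\sum_i (2i - 1 + |m-n|)\alpha_i$ over this constraint set, subject to $\alpha_1 \geq \cdots \geq \alpha_{\min(m,n)} \geq 0$. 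A direct case analysis at integer $r$ shows the optimum equals $(m-r)(n-r)$, and the piecewise-linear interpolation between integer points follows from the convexity of the feasible region.

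For the achievability, I would generate a codebook of length $T$ whose entries are i.i.d.\ circular complex Gaussian, normalized to meet \eqref{energynorm}, and combine the union bound with the standard pairwise error probability calculation. The hypothesis $T \geq m+n-1$ is precisely what is needed so that a typical codeword difference behaves, after averaging, like a matrix whose rank profile forces the averaged pairwise error probability to be dominated by the outage exponent rather than by atypical rank-deficient events. A union bound over exponentially many codewords then yields the matching achievable exponent, and the same ensemble achieves every point $(r, d^*(r))$ simultaneously.

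The main obstacle is the large-deviation computation that reduces the outage probability to the minimization problem above. In particular, identifying the minimizing $\alpha$-profile, showing that it has exactly $\lfloor r \rfloor$ zero components at the relevant operating point, and verifying that the minimum indeed equals $(m-r)(n-r)$ with the correct linear interpolation requires careful case analysis; the Gaussian achievability side is technically straightforward once the condition $T\geq m+n-1$ is exploited, but verifying it is what makes this exact delay assumption indispensable.
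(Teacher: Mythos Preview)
The paper does not give its own proof of this theorem: it is stated with the citation \cite{ZT} and used as a black box throughout, so there is no in-paper argument to compare against. Your outline is a faithful sketch of the original Zheng--Tse proof---outage lower bound via the Wishart eigenvalue large-deviation computation, and Gaussian random-coding achievability under the delay hypothesis $T\geq m+n-1$---and is correct at the level of a proof plan.
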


Let us now consider a coding scheme based on a $k$-dimensional lattice $L$ inside $M_{n\times T}(\C)$ where for a given positive real number $R$ the finite code is
$$
L(R)=\{a|a \in L,||a||_F\leq R \}.
$$
The following lemma is a  well known result from basic lattice theory. 

\begin{lemma}\label{spherical}
Let $L$ be a  $k$-dimensional lattice in  $M_{n\times T}(\C)$ and
$$L(R)=\{a \,|\, a\in L, \, ||a||_F\leq R \,\},$$
then
$$
|L(R)|= cR^{k}+ f(R),
$$
where $c$ is some real constant and $|f(R)| \in o(R^{(k-1/2)})$. 
\end{lemma}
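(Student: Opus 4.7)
The plan is to reduce the count to a standard Gauss-type volume argument, since this is essentially the classical lattice point counting result for convex bodies.

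First I would identify $M_{n\times T}(\C)$ with $\R^{2nT}$ equipped with its standard Euclidean structure, under which the Frobenius norm $\|\cdot\|_F$ becomes the usual $\ell^2$-norm. The lattice $L$ then sits as a rank-$k$ discrete subgroup spanning a real $k$-dimensional subspace $V \subseteq \R^{2nT}$. Write $B_R = \{x : \|x\|_F \leq R\}$ for the ambient Frobenius ball; then $B_R \cap V$ is a $k$-dimensional ellipsoid whose $k$-volume equals $c_0 R^k$, where $c_0$ is the $k$-volume of $B_1 \cap V$. Setting $c = c_0 / \mathrm{vol}(L)$, where $\mathrm{vol}(L)$ denotes the covolume of $L$ inside $V$, we obtain the proposed leading constant.

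Next I would execute the classical tiling argument. Fix a fundamental parallelepiped $P$ for $L$ in $V$, and let $\delta = \mathrm{diam}(P)$. Attach a translated copy $a+P$ to each $a \in L(R)$. Every such tile is contained in the enlarged ellipsoid $B_{R+\delta} \cap V$, and conversely every point of the shrunken ellipsoid $B_{R-\delta} \cap V$ belongs to some such tile (after possibly adjusting $R$ so that $R-\delta \geq 0$). Taking $k$-volumes gives the sandwich
\begin{equation*}
\frac{c_0 (R-\delta)^k}{\mathrm{vol}(L)} \;\leq\; |L(R)| \;\leq\; \frac{c_0 (R+\delta)^k}{\mathrm{vol}(L)}.
\end{equation*}
Expanding both sides by the binomial theorem shows that $|L(R)| - cR^k$ is $O(R^{k-1})$, which is certainly in $o(R^{k-1/2})$.

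The only mildly technical point, and what I would regard as the main obstacle, is writing down the volume of $B_R \cap V$ and the boundary bound cleanly when $V$ is an arbitrary $k$-dimensional real subspace of $\R^{2nT}$. This is handled by noting that the intersection of a round ball with a subspace is again a round ball in that subspace (with the induced metric), so both the volume scaling $R^k$ and the surface-area scaling $R^{k-1}$ used in the tiling bound are immediate; the constant $c_0$ is then simply the $k$-dimensional volume of the unit ball in $V$, and all that matters is that it is finite and independent of $R$. With these remarks in place the proof reduces to the binomial expansion above, and no special structure of the matrix space or of the lattice $L$ beyond its rank is used.
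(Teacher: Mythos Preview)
Your argument is correct and is precisely the classical Gauss tiling argument for lattice point counting in a ball; the sandwich you obtain gives the error term $O(R^{k-1})$, which is certainly $o(R^{k-1/2})$. The paper itself does not prove this lemma at all, merely stating that it is ``a well known result from basic lattice theory,'' so your write-up in fact supplies strictly more than the paper does.
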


In particular it follows that we can choose real numbers $K_1$ and $K_2$ so that
\begin{equation}\label{numberofpoints}
K_1R^k\geq |L(R)|\geq K_2 R^k.
\end{equation}

If we then consider a coding scheme where the finite codes are sets
\begin{equation}\label{codingscheme}
C_L( \rho^{rT/k})=\rho^{-rT/k}L(\rho^{rT/k}),
\end{equation}
we will get a  correct number of codewords for each $\rho$ level and the sets $C_L (\rho^{rT/k})$ clearly do fulfill the average energy constraints \eqref{energynorm} expected  in the DMT-analysis (note that here we have not yet added the $\sqrt{\rho}$ needed in the channel equation. Here and in the following we simply forget the term $\frac{1}{n_t}$ in the channel equation as it is irrelevant in DMT calculations.

If we have that $|\det(X)|\geq b$, for all nonzero $X\in L$ and for some constant $b$, we  say that the lattice $L$ has \emph{non-vanishing determinant} (NVD) property \cite{BR}.

\section{Diversity and multiplexing gain trade-off and upper and lower bounds for determinant sums over  matrix lattices}

Let us suppose that we have a $k$-dimensional lattice $L\subseteq M_n(\C)$. 
The finite codes attached to the spherical coding scheme are then 
$$
C_L(\rho^{rn/k})=\rho^{-rn/k}L(\rho^{rn/k}).
$$

\emph{In the following and in the rest of the paper we always suppose that we do not include determinant of the zero matrix to the sum.}

Let us now suppose that we have $n_r$ receiving antennas.
By considering the error probability of transmitting an arbitrary  codeword $X \in C_L(\rho^{rn/k})$ and using the union bound together with PEP based determinant inequality \cite{TSC}, we get the following bound for average error probability for code $C_L(\rho^{rn/k})$
$$
P_e \leq \sum_{X\in L(2\rho^{rn/k})} \frac{\rho^{-nn_r(1-2rn/k)}}{|det(X)|^{2n_r}},
$$ 
where we have used the knowledge of the lattice structure of the  code $L$. In order to take into account that we are considering differences between
codewords we also took the sum over a ball with double radius. We now have
$$
P_e \leq \rho^{-nn_r(1-2nr/k)}\sum_{X\in L(2\rho^{rn/k})} \frac{1}{|det(X)|^{2n_r}},
$$ 
and we can see that the deciding  factor here is  the sum term on right.

To simplify the situation, we will be  considering sums
$$
S_L(R)=\sum_{X\in L(R)} \frac{1}{|det(X)|^{m}}.
$$
Le us now suppose that we have a $k$-dimensional NVD-lattice $L$ in $M_n(\C)$. Let us first give some easy upper and lower bounds for the  asymptotic behavior of  the sums $\sum_{X\in L(R)} \frac{1}{|det(X)|^{m}}$.

Minkowski inequality gives us that 
$$
|det(X)|\leq \left(\frac{||X||_F}{\sqrt{n}}\right)^n.
$$

We then have that 
$$
\sum_{X\in L(R)} \frac{1}{|det(X)|^{m}}  \geq \sum_{||X||_F\leq R, X\in L} \frac{\sqrt{n}^{mn}}{||X||_F^{nm}}.
$$
The right side of this equality is now the beginning of the \emph{Epstein's zeta-function} of the lattice $L$. 
The asymptotic behavior of this function  is well known and we therefore have
$$
\sum_{X\in L(R)} \frac{1}{|det(X)|^{m}}\geq \sum_{||X||_F\leq R, X\in L} \frac{\sqrt{n}^{mn}}{||X||_F^{nm}}  \geq M R^{k-mn},
$$
where $M$ is a constant independent of $R$. 

On the other hand, let us now consider the worst case and suppose that $|det(X)=1|$ for all nonzero $X\in L$ (remember we are working with NVD-lattices). In this case we have 
$$
\sum_{X\in L(R)} \frac{1}{|det(X)|^{m}}=\sum_{X\in L(R)} 1 =|L(R)|\leq N R^k,
$$
where $N$ is a constant independent of $R$ and where the last inequality follows from \eqref{numberofpoints}.

We can now conclude that 
$$
N R^k\geq  \sum_{X\in L(R)} \frac{1}{|det(X)|^{m}}\geq M R^{k-mn},
$$
where $k-mn\geq0$.

Let us now consider the situation where $L$ is a $2n^2$-dimensional lattice in $M_n(\C)$.

In the following proposition we will use the Landau symbol O.
\begin{proposition}\label{mainDMT}
Let us suppose  that we have a  $2n^2$-dimensional NVD-lattice  $L$ in $M_n(\C)$ and that $2|n$. We then have that 
$$
S_L(R)=\sum_{X\in L(R)} \frac{1}{|det(x)|^{2n_r}} \notin O(R^{n^2-\epsilon}),
$$
for any $n_r\geq n$ and positive $\epsilon$.
\end{proposition}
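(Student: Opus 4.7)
The plan is to apply the converse direction of Theorem \ref{DMT} to the spherical coding scheme $C_L(\rho^{rT/k})$ defined in \eqref{codingscheme}. With $T=n$ and $k=2n^2$ this becomes $C_L(\rho^{r/(2n)}) = \rho^{-r/(2n)}\,L(\rho^{r/(2n)})$, and the estimate \eqref{numberofpoints} together with \eqref{energynorm} shows that it realizes multiplexing gain $r$ with bounded average transmit energy. As already derived at the start of the section, the union bound combined with the PEP determinant inequality and the NVD hypothesis yields
\[
P_e(\rho)\;\leq\;\rho^{-n_r(n-r)}\,S_L\!\bigl(2\rho^{r/(2n)}\bigr).
\]

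On the other side, Theorem \ref{DMT} asserts that for any coding scheme of multiplexing gain $r$ over the $n\times n_r$ MIMO channel one has $P_e(\rho)\geq \rho^{-d^*(r)-\delta}$ for every $\delta>0$ and all $\rho$ large enough, where $d^*(r)=(n-r)(n_r-r)$ because $n_r\geq n$. I would then specialize to $r=n/2$, which is admissible exactly because $2\mid n$. A short computation of exponents gives $n_r(n-r)-d^*(r)=(n-r)r=n^2/4$. Setting $R=2\rho^{1/4}$ so that $\rho=(R/2)^{4}$, the two-sided inequality above rearranges to
\[
S_L(R)\;\geq\;c\,R^{\,n^2-4\delta}
\]
for some constant $c>0$ and all $R$ large enough; since $\delta$ is arbitrary, $S_L(R)/R^{n^2-\epsilon}\to\infty$ for every $\epsilon>0$, which is precisely $S_L\notin O(R^{n^2-\epsilon})$.

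The one point that warrants care is the block-length hypothesis: as stated, Theorem \ref{DMT} assumes $T\geq m+n-1 = n+n_r-1$, whereas our code uses $T=n$. That hypothesis is needed only for the achievability half of Zheng--Tse's theorem; the converse bound $d\leq d^*(r)$, which is all that is invoked here, is an outage-plus-Fano argument that is valid for every block length $T\geq 1$. Quoting that version of the converse, and verifying that the scaled lattice code $C_L$ satisfies the rate and power conventions used in the definition of multiplexing gain, are the only bookkeeping obligations; the remainder of the argument is the short chain of inequalities sketched above, where the choice $r=n/2$ is what makes the exponent on $R$ land exactly at $n^2$.
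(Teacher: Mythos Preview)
Your proof is correct and follows essentially the same route as the paper's: bound $P_e$ from above by the union-bound/PEP estimate $\rho^{-n_r(n-r)}S_L(2\rho^{r/(2n)})$, bound it from below by the Zheng--Tse outage converse $\rho^{-(n-r)(n_r-r)}$, specialize to $r=n/2$, and change variables from $\rho$ to $R$. Your explicit remark that only the converse half of Theorem~\ref{DMT} is invoked here---so that the block-length hypothesis $T\geq n+n_r-1$ is irrelevant---is a point the paper leaves implicit.
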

\begin{proof}
Let us use the previously mentioned coding scheme for the lattice $L$.
Just as previously, the  union bound gives us that
$$
P_e \leq \rho^{-nn_r(1-r/n)}\sum_{X\in L(2\rho^{r/2n})} \frac{1}{|det(X)|^{2n_r}}.
$$
The optimal diversity-multiplexing gain given by Zheng and Tse, however, gives us that for integer values of $r$ we have that
$$
P_e \stackrel{.}{\geq}\rho^{-(n-r)(n_r-r)}.
$$
(For dotted notation see \cite{ZT}).
It follows that $S_L(2\rho^{r/2n})$
can not be bounded by 
$$
\rho^{-((n-r)(n_r-r)- nn_r(1-r/n)+\epsilon)  }= \rho^{-(r^2-nr+\epsilon)}
$$
for any positive $\epsilon$, for integer values of $r$. We can now see that the maximum value of $\rho^{-(r^2-nr+\epsilon)}$  is achieved when $r=n/2$.
We then have that 
$$
\sum_{X\in L(2\rho^{(n/2)/2n})} \frac{1}{|det(X)|^{2n_r}}=\sum_{X\in L(2\rho^{1/4})} \frac{1}{|det(X)|^{2n_r}}
$$
can not be bounded by  any $\rho^{n^2/4 -\epsilon}$. When we set  $\rho^{1/4}=R$, we got that   
$S_L(R)$  can not be bounded with  $R^{n^2 -\epsilon}$ for any positive $\epsilon$.
\end{proof}

We can now see that the for $2n^2$-dimensional lattices there exists arbitrarily large values  of $R$ such  that
$S_L(R)\geq R^{n^2 -\epsilon}$, for any $\epsilon$. The most interesting thing here is that no matter how  large $n_r$ we choose this result is valid.
We also see that in some sense the  behavior of the sum is almost the worst possible.

\section{Some results on the unit group of an order in a $\Q(i)$-central division algebra}

\subsection{Problem statement}

Let us suppose that we have a degree $n$ cyclic extension   $E/\Q(i)$ with Galoi's group $ G(E/\Q(i))=<\sigma>$.
 
We can now define a cyclic  algebra  
$$
\D=(E/\Q(i),\sigma,\gamma)=E\oplus uE\oplus u^2E\oplus\cdots\oplus u^{n-1}E,
$$
where   $u\in\mathcal{D}$ is an auxiliary
generating element subject to the relations
$xu=u\sigma(x)$ for all $x\in E$ and $u^n=\gamma\in F^*$.  Let us now suppose that $\D$ is a division algebra.

We can consider $\D$ as a right  vector space over $E$
and every  element $a=x_0+ux_1+\cdots+u^{n-1}x_{n-1}\in\mathcal{D}$
has the following representation as a matrix $\psi(a)=$
\begin{equation}\label{esitys}
\begin{pmatrix}
x_0& \gamma\sigma(x_{n-1})& \gamma\sigma^2(x_{n-2})&\cdots &
\gamma\sigma^{n-1}(x_1)\\
x_1&\sigma(x_0)&\gamma\sigma^2(x_{n-1})& &\gamma\sigma^{n-1}(x_2)\\
x_2& \sigma(x_1)&\sigma^2(x_0)& &\gamma\sigma^{n-1}(x_3)\\
\vdots& & & & \vdots\\
x_{n-1}& \sigma(x_{n-2})&\sigma^2(x_{n-3})&\cdots&\sigma^{n-1}(x_0)\\
\end{pmatrix}.
\end{equation}

\begin{definition}
 A $\Z$-order $\Lambda$ in $\D$ is a subring of $\D$, having the same identity element as
$\D$, and such that $\Lambda$ is a finitely generated
module over $\Z$ and generates $\D$ as a linear space over
$\Q$.
\end{definition}

A simple and easily describable order is  the $\emph{natural order}$ 
$$
\Lambda_{nat}=\OO_E\oplus u\OO_E\oplus u^2\OO_E\oplus\cdots\oplus u^{n-1}\OO_E,
$$
where $\OO_E$ is the ring of algebraic integers in $E$.

This reveals that we can consider  that the ring $\OO_E$ is a subring of the  ring $\Lambda_{nat}$, in particular from the form of the cyclic representation \eqref{esitys}  we can see that $\psi(\OO_E)$ is a sublattice  of $\psi(\Lambda)$  consisting of diagonal elements.

From our perspective the most important properties of these $\Z$-orders are the following
If $\Lambda$ is an $\Z$-order in a division algebra $\D$, then 
$\psi(\Lambda)$ is  $2n^2$-dimensional NVD lattice in $M_n(\C)$, with
$$
|det(X)|\geq 1,
$$
for all the nonzero elements $X$ in $\psi(\Lambda)$.

The unit group $\Lambda^*$ of an order $\Lambda$  consists of elements $x\in \Lambda$ such that there exists an $y\in \Lambda$, such that
$xy=1$. We refer to the unit group of an order $\Lambda$ by $\Lambda^*$.

The unit group  $\OO_E^*$ of the ring of algebraic integers $\OO_E$ is very well known and has simple structure. However, this is not the case for the group $\Lambda^*$. In most cases it is extremely mystical \cite{Kleinert}.

\begin{lemma}
The group $\OO_E^*$ is a normal subgroup  of a unit group $\Lambda^*$  of a any order $\Lambda$ that includes $\OO_E$.

\end{lemma}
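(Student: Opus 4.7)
The plan is to verify two things in sequence: first, that $\OO_E^{*}$ is contained in $\Lambda^{*}$ as a subgroup, and second, that it is stable under conjugation by every element of $\Lambda^{*}$.

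The inclusion is immediate. Given any $u \in \OO_E^{*}$, its inverse $u^{-1}$ already lies in $\OO_E$, and by hypothesis $\OO_E \subseteq \Lambda$, so $u$ is invertible in $\Lambda$ and thus $u \in \Lambda^{*}$. Because $\OO_E^{*}$ is already a group under the multiplication inherited from $\Lambda$, closure under products and inverses requires no further check, and $\OO_E^{*}$ is a subgroup of $\Lambda^{*}$.

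For normality I must show $\lambda u \lambda^{-1} \in \OO_E^{*}$ for every $\lambda \in \Lambda^{*}$ and $u \in \OO_E^{*}$. The conjugate is automatically a unit of $\Lambda$, so the remaining content is to confine it to the subring $\OO_E$. The approach I would take is to characterize $\OO_E$ intrinsically inside $\Lambda$ --- for instance as $\Lambda \cap E$, where $E$ is viewed as a maximal subfield of $\D$ --- and then argue that this description is preserved under conjugation by $\lambda$, giving $\lambda \OO_E \lambda^{-1} = \OO_E$ and therefore $\lambda \OO_E^{*}\lambda^{-1} = \OO_E^{*}$.

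The main obstacle will be justifying this normalizer property for an arbitrary $\lambda \in \Lambda^{*}$. In a general cyclic division algebra the normalizer $N_{\D^{*}}(E^{*})$ is the proper subgroup $\bigcup_{i=0}^{n-1} u^{i} E^{*}$, and a priori there is no reason for an integral unit to land there. I would expect the argument to use the rigidity coming from integrality together with the fact that the reduced norm of $\lambda$ lies in $\Z[i]^{*}$, so that conjugation by $\lambda$ sends the integrally closed commutative ring $\OO_E$ to another commutative subring of $\Lambda$ of the same rank, which by the maximal-subfield property of $E$ in $\D$ can only be $\OO_E$ itself. This last step is the delicate point and is where the hypothesis that $\Lambda$ contains $\OO_E$ must be used decisively.
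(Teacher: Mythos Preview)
Your argument correctly isolates the crux: normality of $\OO_E^{*}$ in $\Lambda^{*}$ amounts to the containment $\Lambda^{*}\subseteq N_{\D^{*}}(E)$, and as you observe, $N_{\D^{*}}(E)=\bigcup_{k=0}^{n-1} u^{k}E^{*}$ is a proper subgroup of $\D^{*}$. Your proposed rescue, however, does not go through. The claim that $\lambda\,\OO_E\,\lambda^{-1}$ ``can only be $\OO_E$ itself'' is false in general: a division algebra of index $n>1$ contains infinitely many distinct maximal subfields, all conjugate under $\D^{*}$, and an order $\Lambda$ typically contains many pairwise distinct commutative $\Z$-subrings of rank $2n$, each of the form $\lambda\,\OO_E\,\lambda^{-1}$ for some unit $\lambda$. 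Neither the integrality of $\lambda$ nor the fact that its reduced norm lies in $\Z[i]^{*}$ forces $\lambda E\lambda^{-1}=E$. So the ``delicate point'' you flag is a genuine obstruction, not a technicality to be finessed.

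The paper's route is different: it checks the coset identity $g\,\OO_E^{*}=\OO_E^{*}\,g$ for $g\in E$ (trivially, by commutativity) and for $g=u^{k}$ (via the relation $xu=u\sigma(x)$ together with the fact that $\sigma$ permutes $\OO_E^{*}$), and then asserts that since every element of $\D$ is a linear combination of such $g$'s, the identity extends to all of $\Lambda^{*}$. That last inference is unjustified: the set-wise commutation condition $gS=Sg$ is preserved under \emph{products} of the $g$'s but not under \emph{sums}, so knowing it on $E$ and on the powers of $u$ does not yield it for a general $\lambda=\sum_{k}u^{k}x_{k}$. Thus the paper's argument and yours run aground at the same structural point, just packaged differently. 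Indeed, the density estimate established later (Proposition~\ref{densityofunits}) forces $\Lambda^{*}$ to contain units outside $\bigcup_{k}u^{k}E^{*}$ whenever $n\geq 2$, so the normality assertion appears not to hold as stated; fortunately the downstream use in Proposition~\ref{infindeksi} only needs the index $[\Lambda^{*}:\OO_E^{*}]$, which is well defined for any subgroup.
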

\begin{proof}
Clearly  $x(\OO_E)^*=(\OO_E)^*x$, when $x\in E$. For elements $u^k$ we have that
$$
u^k(\OO_E^*)=\sigma^k(\OO_E^*)u^k=(\OO_E^*)u^k,
$$
where the last equality follows from the fact that Galois group operates bijectively on the unit group $\OO_E^*$.
As all the elements of $\D$ are linear combinations of these elements we can see that $\OO_E^*$ is indeed a normal group inside $\Lambda^*$. 
\end{proof}

Due to the normality of the group $\OO_E^*$,  we can for example consider the number of elements $[\Lambda^*:\OO_E^*]$  in the factor group $\Lambda^*/\OO_E^*$.
In this section we are using the simple results  concerning sums of matrix lattices derived from DMT  and we will prove that
$$
[\Lambda^*:\OO_E^*]=\infty.
$$

\begin{remark}
The authors do not suggest that this result is new and it likely follows as a corollary from some more general algebraic result. However, we point out that it is likely not a trivial one. Let us compare  it to another result. This well known and simple result gives us that $[\Lambda^*:\OO_K^*]<\infty$ ($K$ is the center)  if and only if $\D$ is  a totally definite quaternion algebra over  a totally real field. The most simple way to prove this easy result is  to reduce it to  the fact that already $[\OO_E^*:\OO_K^*]=\infty$ (where $E$ is a maximal subfield). The result we are going to prove is considerably stronger and there is no bigger subfield to use as a help. 
\end{remark}

The main idea of our proof is to compare the number of  elements of $\psi(\Lambda^*)\subset M_n(\C)$ and $\psi(\OO_E^*)\subset M_n(\C)$ inside a hypersphere  of radius $R$. We will see that   $\psi(\OO_E^*)$ is not "dense" enough to be a subgroup of finite index in $\psi(\Lambda^*)$.
 
\subsection{Density of units in $\OO_E^*$}
Let us suppose that we have an index $n$ division algebra  $\D =(E/\Q(i),\sigma,\gamma)$.
As previously described in \eqref{esitys} if we now restrict the mapping $\psi$ to the elements of $\OO_E$, we get
an embedding of $\OO_E$ into $M_n(\C)$ by
$$
\psi(x)=\mathrm{diag}(\sigma(x),\dots, \sigma^n(x)),
$$
where $x$ is an element in $\OO_E$.

The ring of algebraic integers $\OO_E$ has a  $\Z$-basis $W=\{w_1,\dots ,w_{2n}\}$ and therefore
$$
\psi(\OO_E)=\psi(w_1)\Z+\cdots +\psi(w_{2n})\Z,
$$
is a $2n$-dimensional lattice of matrices in $M_n(\C)$. For each nonzero element  $a\in \OO_K$, we have that $|det(\psi(a))|\geq 1$.

The unit group $\OO_E^*$ of the ring
$\OO_E$ consists of such elements $u \in \OO_E$, that $|\mathrm{det}(\psi(u))|=1$.

The following lemma is an elementary corollary from well known results. We will skip the proof.
\begin{lemma}\label{units}
Let us suppose that we have a cyclic extension $E/\Q(i)$, where $[E:\Q(i)]=n$. 

We then have that 
$$
|\psi(\OO_E^*)\cap B(R)|\leq M log(R)^{n-1},
$$
where $M$ is a constant independent of $R$.
\end{lemma}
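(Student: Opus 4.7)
The plan is to invoke Dirichlet's unit theorem together with the standard logarithmic embedding, and then observe that Frobenius norm bounds translate into bounded-box constraints in the logarithmic picture.

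First I would determine the signature of $E$. Since $\Q(i)\subseteq E$, the field $E$ is totally imaginary: any embedding $E\hookrightarrow\C$ must send $i$ to $\pm i$, so the embeddings pair up into $n$ conjugate pairs, giving $r_1=0$ and $r_2=n$. Dirichlet's unit theorem then says $\OO_E^*\cong \mu(E)\times \Z^{n-1}$, where $\mu(E)$ is the (finite) group of roots of unity in $E$. Let $\sigma^1,\dots,\sigma^n$ be the $n$ embeddings corresponding to the chosen $\Q(i)$-Galois orbit used to define $\psi$. Consider the logarithmic embedding
$$
\ell:\OO_E^*\longrightarrow \R^n,\qquad \ell(u)=\bigl(\log|\sigma^1(u)|,\dots,\log|\sigma^n(u)|\bigr).
$$
By Dirichlet, $\ell(\OO_E^*)$ is a full rank $n-1$ lattice inside the hyperplane $H=\{x:\sum x_i=0\}$ (this uses $|\mathrm{det}(\psi(u))|=1$ for units, i.e.\ $\prod|\sigma^i(u)|=1$), with kernel exactly the finite group $\mu(E)$.

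Next I would translate the condition $\psi(u)\in B(R)$ into a constraint on $\ell(u)$. Because $\psi(u)$ is the diagonal matrix with entries $\sigma^i(u)$,
$$
\|\psi(u)\|_F^2=\sum_{i=1}^n|\sigma^i(u)|^2\leq R^2,
$$
so $|\sigma^i(u)|\leq R$ for every $i$, which gives $\log|\sigma^i(u)|\leq \log R$. Combined with the hyperplane constraint $\sum_i\log|\sigma^i(u)|=0$, any single coordinate is also bounded below by $-(n-1)\log R$. Hence $\ell(u)$ lies in a cube of side length $O(\log R)$ inside $H$.

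Finally I would count: inside a cube of side $c\log R$, a rank $(n-1)$ lattice $\ell(\OO_E^*)$ contains at most $M'\log(R)^{n-1}$ points, where $M'$ depends only on the fundamental domain of this lattice (i.e.\ essentially on the regulator of $E$), not on $R$. Pulling back via $\ell$ multiplies the count by the constant $|\mu(E)|$, giving
$$
|\psi(\OO_E^*)\cap B(R)|\leq M\log(R)^{n-1}
$$
for an appropriate $M$ independent of $R$. There is no real obstacle here; the only mild subtlety is keeping track of the correct rank and making sure the hyperplane constraint is used to control the box from below, which is exactly what turns a one-sided bound $|\sigma^i(u)|\leq R$ into a genuinely bounded region of $\R^{n-1}$.
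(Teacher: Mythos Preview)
Your argument is correct and is precisely the standard route: Dirichlet's unit theorem gives rank $r_1+r_2-1=n-1$ (since $E\supseteq\Q(i)$ forces $r_1=0$, $r_2=n$), the logarithmic embedding sends $\OO_E^*$ onto a rank $n-1$ lattice in the hyperplane $\sum x_i=0$, the Frobenius bound $\|\psi(u)\|_F\leq R$ yields $\log|\sigma^i(u)|\leq \log R$ for every $i$, and the hyperplane constraint then bounds each coordinate from below as well, confining $\ell(u)$ to a region of $(n-1)$-dimensional size $O((\log R)^{n-1})$. Counting lattice points and multiplying by $|\mu(E)|$ gives the claim.

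The paper itself does not give a proof of this lemma at all; it simply states that the result is ``an elementary corollary from well known results'' and skips it. Your write-up is exactly the kind of justification the authors had in mind, so there is nothing to contrast. The one cosmetic point worth tidying is to state explicitly that the $n$ maps $\sigma^1,\dots,\sigma^n$ used in $\psi$ form a set of representatives for the $n$ pairs of complex conjugate embeddings of $E$, so that your $\ell$ is (up to a harmless factor of $2$) the classical Dirichlet logarithmic map; this is implicit in your first paragraph but could be said in one line.
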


This result proves that the units inside $\OO_E$ are not particularly dense in the lattice $\psi(\OO_E)$. If we consider the
lattice $\psi(\OO_E)$ we have that  $\psi(\OO_E)\cap B(R)$ has roughly $R^{2n}$  elements. The same hypersphere $B(R)$ on the other hand has only 
roughly $log(R)^{n-1}$ units.

\subsection{Density of  the group $\Lambda^*$}

In this section the main main result is Proposition \ref{zeta}, but we need first some results and concepts.
Let us suppose that we have an index $n$ $\Q(i)$-central division algebra $\D$ and that $\Lambda$ is an order in $\D$.
The  (left) \emph{zeta-function} \cite{Sol} of the order $\Lambda$ is
$$
\zeta_{\Lambda}(s)=\sum_{I\in I_{\Lambda}}\frac{1}{[\Lambda:I]^{s}},
$$
where $\Re s> 1$ and $I_{\Lambda}$ is the set of left ideals of $\Lambda$. The fact that we need from this function is that
it is indeed a converging series \cite{BR1}.

The result that will connect this sum  to our matrix lattice considerations is the following
\begin{equation}
|det(\psi(x))|^{2n}= [\Lambda:\Lambda x].
\end{equation}

\begin{lemma}\cite{KW}\label{mismatch}
Let us suppose that $A$ and $B$ are invertible  matrices in  $M_n(\C)$ and that 
$a_1\geq\dots\geq a_n$ are the eigenvalues of $AA^{\dagger}$ and $b_1\leq\dots \leq b_n$ are the eigenvalues of $BB^{\dagger}$.
We then have that
$$
||AB||_F^2\geq \sum_{i=1}^{n} a_i b_i.
$$

\end{lemma}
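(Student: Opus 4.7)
The plan is to rewrite $\|AB\|_F^2$ as the trace of a product of two Hermitian positive semidefinite matrices, and then invoke the classical interplay between doubly stochastic matrices and the rearrangement inequality.

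First I would use the cyclic property of the trace to write
$$
\|AB\|_F^2 = \mathrm{tr}\bigl(AB(AB)^{\dagger}\bigr) = \mathrm{tr}(A^{\dagger}A \cdot BB^{\dagger}).
$$
Setting $P := A^{\dagger}A$ and $Q := BB^{\dagger}$, both matrices are Hermitian positive semidefinite. Since $A^{\dagger}A$ and $AA^{\dagger}$ share the same spectrum, $P$ has eigenvalues $a_1 \geq \cdots \geq a_n$, while $Q$ has eigenvalues $b_1 \leq \cdots \leq b_n$. The lemma is therefore equivalent to the trace inequality
$$
\mathrm{tr}(PQ) \geq \sum_{i=1}^{n} a_i b_i.
$$

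To obtain this, I would diagonalize simultaneously: write $P = U D_a U^{\dagger}$ and $Q = V D_b V^{\dagger}$ with $D_a = \mathrm{diag}(a_1,\dots,a_n)$ and $D_b = \mathrm{diag}(b_1,\dots,b_n)$, and set the unitary $W := U^{\dagger}V$. A direct expansion gives
$$
\mathrm{tr}(PQ) = \mathrm{tr}(D_a W D_b W^{\dagger}) = \sum_{i,j=1}^{n} |W_{ij}|^{2}\, a_i\, b_j.
$$
Because $W$ is unitary, the matrix $S := (|W_{ij}|^{2})$ is doubly stochastic, so by Birkhoff's theorem it can be written as a convex combination $\sum_k \lambda_k P_{\pi_k}$ of permutation matrices, whence
$$
\mathrm{tr}(PQ) = \sum_k \lambda_k \sum_{i=1}^{n} a_i b_{\pi_k(i)}.
$$

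The final step is the classical rearrangement inequality: since $(a_i)$ is decreasing and $(b_i)$ is increasing, the sum $\sum_i a_i b_{\pi(i)}$ is minimized over all permutations $\pi$ by the identity, so each inner sum is at least $\sum_i a_i b_i$, and taking the convex combination preserves the bound. The main obstacle is really only bookkeeping: keeping the two opposite orderings on the spectra of $P$ and $Q$ straight so that the Birkhoff decomposition feeds into the \emph{minimum} rather than the \emph{maximum} side of the rearrangement inequality. Once this is in order, every individual step is a standard manipulation.
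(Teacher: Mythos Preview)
Your argument is correct. The paper does not actually supply a proof of this lemma; it simply cites \cite{KW}, so there is nothing to compare against beyond noting that what you have written is the standard route to this von~Neumann--type trace inequality: reduce $\|AB\|_F^2$ to $\mathrm{tr}(A^{\dagger}A\,BB^{\dagger})$, pass to a doubly stochastic matrix via simultaneous unitary diagonalization, and finish with Birkhoff plus the rearrangement inequality. Each step is valid, and you have the orderings right so that the identity permutation realizes the minimum. As a minor remark, your proof nowhere uses the invertibility hypothesis on $A$ and $B$, so you have in fact established the inequality in slightly greater generality than stated.
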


\begin{lemma}\label{Ux}
Let us suppose that we have a $\Q(i)$-central  division algebra $\D$ with index $n$ and  that $\Lambda$ is an order inside $\D$. If $x\in \Lambda$,  where $||\psi(x)||_F\leq R$, is a non-zero element we have that
$$
|\psi(\Lambda^*x) \cap B(R)|=|\{u \,|\, ||\psi(xu)||_F \leq R, u \in \Lambda^*\}| 
$$
$$
\leq  |\psi(\Lambda^*)\cap B(R^n )|.
$$

\end{lemma}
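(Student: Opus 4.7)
My plan is to exhibit an injective map from the indexing set $\{u \in \Lambda^* : ||\psi(xu)||_F \leq R\}$ into $\psi(\Lambda^*) \cap B(R^n)$. Since $\psi$ is injective, the map $u \mapsto \psi(u)$ is already injective on $\Lambda^*$, so the entire content of the lemma reduces to the pointwise implication: whenever $u \in \Lambda^*$ and $||\psi(xu)||_F \leq R$, one has $||\psi(u)||_F \leq R^n$. Counting then gives the claimed inequality.

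To establish the pointwise bound, I would apply Lemma \ref{mismatch} with $A = \psi(x)$ and $B = \psi(u)$. Both matrices are invertible because $x$ is nonzero in the division algebra $\D$ and $u \in \Lambda^*$. Let $a_1 \geq \cdots \geq a_n$ denote the eigenvalues of $\psi(x)\psi(x)^\dagger$ and $b_1 \leq \cdots \leq b_n$ those of $\psi(u)\psi(u)^\dagger$. Since $a_i \geq a_n$ for every $i$, the lemma yields
\[
R^2 \geq ||\psi(xu)||_F^2 \geq \sum_{i=1}^n a_i b_i \geq a_n \sum_{i=1}^n b_i = a_n\, ||\psi(u)||_F^2,
\]
where the last equality uses the trace identity $\sum_i b_i = \mathrm{tr}(\psi(u)\psi(u)^\dagger) = ||\psi(u)||_F^2$.

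It remains to bound $a_n$ from below. This is where the NVD property of the order enters: $\prod_{i=1}^n a_i = |\det \psi(x)|^2 \geq 1$. Combined with $a_1 \leq \sum_i a_i = ||\psi(x)||_F^2 \leq R^2$, this gives
\[
a_n \geq \frac{1}{a_1 a_2 \cdots a_{n-1}} \geq a_1^{-(n-1)} \geq R^{-2(n-1)}.
\]
Substituting this into the previous display yields $||\psi(u)||_F^2 \leq R^{2n}$, i.e., $||\psi(u)||_F \leq R^n$.

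I do not expect a genuine obstacle here; the argument is a two-line application of Lemma \ref{mismatch} together with the NVD estimate on $\psi(x)$. The one conceptual point worth flagging is that the \emph{unit} property of $u$ plays no role in the norm estimate itself (the NVD property on $\Lambda$ supplies the lower bound on $|\det\psi(x)|$), and appears in the statement only so that the target of the injection can be identified as $\psi(\Lambda^*) \cap B(R^n)$. The power $R^n$ on the right is tight in spirit, reflecting that the worst case singular value of $\psi(x)$ may be as small as $R^{-(n-1)}$ while the product of singular values stays bounded below by one.
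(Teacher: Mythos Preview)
Your argument is correct and is essentially identical to the paper's: both apply Lemma~\ref{mismatch} to the pair $\psi(x),\psi(u)$, use the NVD property $\prod_i a_i \geq 1$ together with $a_i \leq ||\psi(x)||_F^2 \leq R^2$ to get $a_n \geq R^{-2(n-1)}$, and then conclude $||\psi(u)||_F \leq R^n$. The only cosmetic difference is that you bound $\sum a_i b_i \geq a_n \sum b_i$ via the single smallest eigenvalue $a_n$, whereas the paper bounds every $\lambda_i$ below by $R^{-2(n-1)}$ separately; these are the same computation.
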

\begin{proof}
Let us suppose that the eigenvalues of $\psi(x)\psi(x)^{\dagger}$ are $\lambda_1,\dots,\lambda_n$.
The condition $||\psi(x)||_F\leq R$ then gives us that  $\lambda_i\leq R^2 \, \, \forall i$. We also have that 
$ |\lambda_1|\cdots |\lambda_n|\geq 1$. It now follows that
\begin{equation}\label{coordinatesize}
 |\lambda_i|\geq \frac{1}{R^{2(n-1)}}\, \forall i.
 \end{equation}
 Let us now suppose that $u$ is such  a unit  that
$||\psi(ux)||_F=||\psi(u)\psi(x)||_F \leq R$ and let $u_1\geq\cdots \geq u_n$ be the eigenvalues  of $\psi(u)\psi(u)^{\dagger}$. According  to Lemma \ref{mismatch} we then have that
$$
||\psi(u)\psi(x)||_F^2 \geq \sum \lambda_i u_i
$$

Combining equation \eqref{coordinatesize} and $||\psi(u)\psi(x)||_F \leq R$ now gives us  that $||\psi(u)||_F \leq R^n$. 

\end{proof}

\begin{proposition}\label{zeta}
Let us suppose that we have a $\Q(i)$-central index $n$ division algebra $\D$  and that $\Lambda$ is a $\Z$-order in $\D$. We then have
$$
\sum_{||\psi(x)||_F\leq R, x\in \Lambda} \frac{1}{|det(\psi(x))|^{2n n_r}} \leq M |\psi(\Lambda^*)\cap B(R^n)|,
$$
 where $M$ is  independent of $R$.
\end{proposition}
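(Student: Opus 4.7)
The plan is to organise the sum on the left according to the left principal ideal $I=\Lambda x$ generated by each nonzero $x$, then to use the identity $|det(\psi(x))|^{2n}=[\Lambda:\Lambda x]$ to collapse the determinant denominator onto the ideal index, and finally to bound the number of generators of each ideal using Lemma \ref{Ux}.

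To set this up, I would first observe that for nonzero $x,y\in\Lambda$ one has $\Lambda x=\Lambda y$ if and only if $y=ux$ with $u\in\Lambda^*$: the ``if'' direction is clear, and for ``only if'' the relations $y=ax$ and $x=by$ force $ab=1$ in $\D$, so $a^{-1}=b\in\Lambda$ and hence $a\in\Lambda^*$. Units satisfy $|det(\psi(u))|=1$, since $|det(\psi(u))|$ and $|det(\psi(u^{-1}))|$ are both $\geq 1$ by the NVD property, so $|det(\psi(x))|$ is a function only of the ideal $\Lambda x$. The sum can therefore be re-indexed as
$$
\sum_{\substack{x\in\Lambda\setminus\{0\}\\ ||\psi(x)||_F\leq R}}\frac{1}{|det(\psi(x))|^{2nn_r}}=\sum_{I}\frac{N(I,R)}{[\Lambda:I]^{n_r}},
$$
where the outer sum ranges over those principal left ideals $I$ that admit a generator of Frobenius norm at most $R$, and $N(I,R)$ counts these generators.

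For each such $I$, I would pick a minimum-norm generator $x_I$, so that $||\psi(x_I)||_F\leq R$ and every other generator of $I$ is of the form $ux_I$ with $u\in\Lambda^*$. Lemma \ref{Ux} then gives
$$
N(I,R)=|\{u\in\Lambda^*\,:\,||\psi(ux_I)||_F\leq R\}|\leq |\psi(\Lambda^*)\cap B(R^n)|,
$$
a bound uniform in $I$. Pulling this factor out and enlarging the remaining sum to run over all left ideals of $\Lambda$ yields
$$
\sum_{\substack{x\in\Lambda\setminus\{0\}\\ ||\psi(x)||_F\leq R}}\frac{1}{|det(\psi(x))|^{2nn_r}}\leq |\psi(\Lambda^*)\cap B(R^n)|\cdot\zeta_{\Lambda}(n_r),
$$
so setting $M=\zeta_{\Lambda}(n_r)$ completes the argument, provided the zeta series converges.

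The step I expect to require the most care is the noncommutative associates argument, namely that the generators of a principal left ideal form a single orbit $\Lambda^*x_I$; this relies on the fact that in $\Lambda$ one-sided inverses are automatically two-sided and remain inside $\Lambda$, a property that genuinely uses the division-algebra hypothesis on $\D$. The remaining ingredient is the convergence of $\zeta_{\Lambda}(n_r)$, which holds for $n_r>1$ by the result cited before the statement and covers the regime of interest ($n_r\geq n\geq 2$), so $M$ is indeed an $R$-independent constant as required.
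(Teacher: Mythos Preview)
Your proposal is correct and follows essentially the same route as the paper: group the terms by the principal left ideal they generate, use $|det(\psi(x))|^{2n}=[\Lambda:\Lambda x]$ to identify the ideal-indexed sum with a partial sum of $\zeta_\Lambda(n_r)$, and bound each multiplicity $N(I,R)$ uniformly via Lemma~\ref{Ux}. If anything, your write-up is more careful than the paper's---you spell out the associates argument and the convergence range $n_r>1$ explicitly, whereas the paper simply asserts both.
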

\begin{proof}
The sum
$$ 
\sum_{||\psi(a)||_F\leq R, a\in \Lambda }\frac{1}{|det(\psi(a))|^{2nn_r}}
$$
can be written as
$$
\sum_{x_i\in X}\frac{A_i}{|\mathrm{det}(\psi(x_i))|^{2nn_r}},
$$
where $X$ is some collection of elements $x_i \in \Lambda$, $||\psi(x_i)||_F\leq R$, such that each generate a separate  ideal.  The numbers  $A_i$ present the number of elements inside $B(R)$ each generating the same ideal $x_i\Lambda$. 
We then see that
$$
\sum_{x_i\in X}\frac{1}{|\mathrm{det}(\psi(x_i))|^{2nn_r}}=\sum_{ x_i \in X }\frac{1}{[\Lambda:\Lambda x_i]^{n_r}},
$$
is a part of the zeta-function of  the order $\Lambda$  at point $n_r\geq 2 $. Therefore it is always bounded by some constant $M$ independent of $R$.

From the ideal theory of orders we have that if $\Lambda x_k=\Lambda x_{k'}$, then $x_k$ and $x_k'$ must differ by a unit. Therefore  we can now apply Lemma \ref{Ux} that gives us that  for all $A_i$ we have $A_i \leq |\psi(\Lambda^*)\cap B(R^n )|$. It follows that
$$
\sum_{x_i\in X }\frac{A_i}{[\Lambda:\Lambda x_i]^{n_r}}
$$
$$
\leq \sum_{x_i \in X}\frac{|\psi(\Lambda^*)\cap B(R^n )|}{[\Lambda:\Lambda x_i]}
$$
$$
\leq M|\psi(\Lambda^*)\cap B(R^n )|,
$$
where $M$ is a constant independent of $R$.
\end{proof}

Let us now combine this result with  Proposition \ref{mainDMT}.

\begin{proposition}\label{densityofunits}

Let us suppose that $\Lambda$ is an order in an index $n=2m$ $\Q(i)$-central division algebra $\D$.
We then have that
$$
|\psi(\Lambda^*) \cap B(R)| \notin  O(R^{n-\epsilon}), 
$$ 
for any $\epsilon$.

\end{proposition}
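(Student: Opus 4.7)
The plan is to combine Proposition \ref{mainDMT} with Proposition \ref{zeta}. The key observation is that $\psi(\Lambda)$ is a $2n^2$-dimensional NVD lattice inside $M_n(\C)$, and since the hypothesis $n=2m$ means $2 \mid n$, Proposition \ref{mainDMT} applies directly with $L = \psi(\Lambda)$.

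First I would fix any integer $n_r \geq 2$; this is what is needed for the zeta-function argument used in the proof of Proposition \ref{zeta} to converge. The determinant sum that appears on the left-hand side of Proposition \ref{zeta} carries the exponent $2nn_r$, which I would reinterpret by applying Proposition \ref{mainDMT} with its ``receive antenna'' parameter set to $nn_r$ rather than $n_r$. Since $nn_r \geq n$, the hypothesis ``$n_r \geq n$'' in Proposition \ref{mainDMT} is satisfied by this substituted parameter, and one concludes that $\sum_{||\psi(x)||_F \leq R,\, x \in \Lambda} |\det(\psi(x))|^{-2nn_r}$ is not in $O(R^{n^2 - \epsilon})$ for any positive $\epsilon$.

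Next I would feed this lower bound on the sum into the upper bound supplied by Proposition \ref{zeta}, which states that the very same sum is bounded above by $M \cdot |\psi(\Lambda^*) \cap B(R^n)|$ for a constant $M$ independent of $R$. Combining these two facts yields $|\psi(\Lambda^*) \cap B(R^n)| \notin O(R^{n^2 - \epsilon})$ for any positive $\epsilon$. Writing the radius as $R' := R^n$, so that $R = (R')^{1/n}$, translates this immediately into $|\psi(\Lambda^*) \cap B(R')| \notin O((R')^{n - \epsilon/n})$; since $\epsilon > 0$ is arbitrary, so is $\epsilon/n > 0$, which gives the claimed statement.

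Because the heavy lifting has already been done in Propositions \ref{mainDMT} and \ref{zeta}, I do not anticipate any genuine obstacle in this final step. The only two points that require care are matching the exponents -- the $n_r$ of Proposition \ref{mainDMT} must be identified with the $nn_r$ coming from Proposition \ref{zeta} -- and remembering to rescale the radius from $R^n$ back down to $R$ via a change of variables at the end.
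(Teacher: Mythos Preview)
Your proposal is correct and follows exactly the same route as the paper's proof: apply Proposition \ref{mainDMT} to the lattice $\psi(\Lambda)$, combine with the upper bound from Proposition \ref{zeta}, and rescale the radius $R^n \mapsto R$. You are in fact more careful than the paper about identifying the receive-antenna parameter of Proposition \ref{mainDMT} with $nn_r$ so that the exponents match, and about making the final change of variables explicit.
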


\begin{proof}
We have that $\psi(\Lambda)$ is a $2n^2$-dimensional lattice in $M_n(\C)$.
According to  Proposition \ref{mainDMT} we therefore have that
$$
\sum_{x \in \Lambda,\, ||\psi(x)||_F\leq R} \frac{1}{|det(\psi(x))|^{2n n_r}} \notin O(R^{n^2-\epsilon}) 
$$
for any positive $\epsilon$. On the other hand  Proposition \ref{zeta} gives us that 
$$
\sum_{x \in \Lambda, \,||\psi(x)||_F\leq R} \frac{1}{|det(\psi(x))|^{2n n_r}}\leq M|\psi(\Lambda^*)\cap B(R^n )|,
$$
for some constant independent of $R$. It then follows that 
$$
|\psi(\Lambda^*)\cap B(R )| \notin O(R^{n-\epsilon}).
$$

\end{proof}
This simply means that we can find arbitrarily big $R$ such that  hypersphere $B(R)$ with radius $R$ in $M_n(\C)$ has 
close to $R^n$ elements of $\psi(\Lambda^*)$. On the other hand  $\psi(\Lambda)$ has approximately  $R^{2n^2}$ elements inside the same hypersphere.
While the number of units is small compared to the whole number of points of the lattice, it is still remarkably larger than in the case of number fields where it is in class $(log R)^{n-1}$.

\subsection{A proof that $[\Lambda^*:\OO_E^*]=\infty $}

In this section we are finally giving the proof for the claimed result. We now have the estimates for the number of elements in $\psi(\Lambda^*)$ and 
$\psi(\OO_E^*)$ inside a hypersphere with radius $R$ in $M_n(\C)$. Now we only need some simple results before the  finale.

\begin{lemma}\label{pulla}
Let us suppose that $X$ is a set of matrices in $M_n(\C)$ and that $A$ is an invertible matrix in $M_n(\C)$. If
$f$ is such a function that
$$
|B(R)\cap X |\leq f(R),\, \forall R
$$
then there is such a constant $M$ that
$$
|B(R)\cap AX| \leq \, f(MR),\forall R.
$$
\end{lemma}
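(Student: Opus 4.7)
The plan is to exploit the fact that left-multiplication by the invertible matrix $A$ can only distort Frobenius norms by a bounded multiplicative factor. Specifically, I would start by noting that every element $y\in B(R)\cap AX$ can be written uniquely as $y=Ax$ with $x\in X$, and that the preimage $x=A^{-1}y$ satisfies
$$
\|x\|_F=\|A^{-1}y\|_F \leq \|A^{-1}\|_{op}\,\|y\|_F,
$$
where $\|\cdot\|_{op}$ denotes the spectral norm, by the standard inequality $\|BC\|_F\leq \|B\|_{op}\|C\|_F$. Setting $M:=\|A^{-1}\|_{op}$ (a constant depending only on $A$, not on $R$), we get $\|x\|_F\leq MR$, so $x\in X\cap B(MR)$.

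The map $y\mapsto A^{-1}y$ is injective, so it embeds $B(R)\cap AX$ into $X\cap B(MR)$. Combining this with the hypothesis gives
$$
|B(R)\cap AX| \leq |X\cap B(MR)| \leq f(MR),
$$
which is exactly the desired conclusion.

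There is essentially no obstacle here; the only thing to be slightly careful about is using a matrix norm inequality that is compatible with the Frobenius norm used to define the balls $B(R)$. Any sub-multiplicative bound of the form $\|A^{-1}y\|_F\leq c\|y\|_F$ will do, and taking $M=c$ finishes the argument. In particular the result does not even require $X$ to have any lattice or group structure, which is consistent with how it will be applied in the next step (to compare translates of $\psi(\OO_E^*)$ inside $\psi(\Lambda^*)$).
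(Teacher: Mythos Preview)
Your proof is correct and is essentially the same argument as the paper's: both bound the Frobenius-norm distortion under multiplication by $A$ and take $M$ to be the reciprocal of the smallest singular value of $A$. The only cosmetic difference is that the paper invokes its earlier eigenvalue ``mismatch'' lemma to phrase this as $\|Ax\|_F^2\geq \lambda_{\min}(A^{\dagger}A)\,\|x\|_F^2$, whereas you use the equivalent operator-norm inequality $\|A^{-1}y\|_F\leq \|A^{-1}\|_{op}\,\|y\|_F$; the resulting constant $M=\|A^{-1}\|_{op}=1/\sqrt{\lambda_{\min}(A^{\dagger}A)}$ is identical.
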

\begin{proof}
Let us suppose that $\lambda_1$ is the smallest eigenvalue of $A^{\dagger}A$. According to Lemma \ref{mismatch} we now have that
for all the elements   $Ax \in AX$,  $||Ax||_F^2\geq \lambda_1 ||x||_F^2 $. It follows that
for a matrix  $Ax$, where 
$$
||Ax||_F\leq R,
$$
we must have that $||x||\leq\frac{R}{\sqrt{\lambda_1}}$. We can now see  that $\frac{1}{\sqrt{\lambda_1}}$ is suitable for a constant $M$.

\end{proof}

\begin{proposition}\label{infindeksi}
Let us suppose  $\D=(E/\Q(i),\sigma, \gamma)$ is a cyclic division algebra. Let us suppose that  $\Lambda$ is such an order that it includes
the natural order $\Lambda_{nat}$. We then have that $\OO_E^*$ is a normal subgroup of $\Lambda^*$ and that
$$
[\Lambda^*:\OO_E^*]=\infty.
$$
\end{proposition}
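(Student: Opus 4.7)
The normality assertion is already handled by the lemma proved earlier in the section, so the real content is to show $[\Lambda^*:\OO_E^*]=\infty$. My plan is to argue by contradiction, using the density gap between Proposition \ref{densityofunits} (polynomial lower bound for $\psi(\Lambda^*)$) and Lemma \ref{units} (polylogarithmic upper bound for $\psi(\OO_E^*)$).

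First I would suppose, for contradiction, that $k=[\Lambda^*:\OO_E^*]<\infty$. Pick coset representatives $a_1,\dots,a_k\in\Lambda^*$ so that $\Lambda^*=\bigcup_{i=1}^k a_i\OO_E^*$, hence $\psi(\Lambda^*)=\bigcup_{i=1}^k \psi(a_i)\psi(\OO_E^*)$. Since each $a_i$ is a unit in a division algebra, each matrix $\psi(a_i)$ is invertible in $M_n(\C)$, which is exactly the hypothesis required to invoke Lemma \ref{pulla}.

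Next I would apply Lemma \ref{units} to $\psi(\OO_E^*)$, which furnishes a constant $M$ with $|\psi(\OO_E^*)\cap B(R)|\leq M(\log R)^{n-1}$ for all $R$. Using Lemma \ref{pulla} with $A=\psi(a_i)$ and $X=\psi(\OO_E^*)$, I obtain constants $M_i$ with
$$
|\psi(a_i)\psi(\OO_E^*)\cap B(R)|\leq M(\log(M_iR))^{n-1}.
$$
Summing the $k$ cosets and absorbing constants, this yields $|\psi(\Lambda^*)\cap B(R)|\leq C(\log R)^{n-1}$ for all sufficiently large $R$ and some constant $C$ independent of $R$.

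Finally, I would compare this bound with Proposition \ref{densityofunits}, which (under the assumption that $n=2m$, as required there) asserts $|\psi(\Lambda^*)\cap B(R)|\notin O(R^{n-\epsilon})$ for any $\epsilon>0$. Since $(\log R)^{n-1}\in O(R^{\epsilon})$ for every $\epsilon>0$, these two statements are incompatible, producing the desired contradiction. The step I expect to need the most care is the passage through Lemma \ref{pulla}: I must verify that the finiteness of the index genuinely reduces the density of $\psi(\Lambda^*)$ to that of finitely many translates of $\psi(\OO_E^*)$, which relies crucially on each $\psi(a_i)$ being an invertible matrix (guaranteed by $\D$ being a division algebra, so that $a_i\in\Lambda^*$ forces $\psi(a_i)\in GL_n(\C)$). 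Once this is in hand, the contradiction between the polylogarithmic upper bound and the near-polynomial lower bound closes the proof.
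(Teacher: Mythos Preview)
Your proposal is correct and follows essentially the same route as the paper: assume finite index, decompose $\Lambda^*$ into finitely many cosets of $\OO_E^*$, use Lemma \ref{units} and Lemma \ref{pulla} to bound each coset polylogarithmically, then contradict Proposition \ref{densityofunits}. Your observation that Proposition \ref{densityofunits} requires $n$ even is a genuine caveat the paper glosses over.
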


\begin{proof}
Let us suppose that $[\Lambda^*:\OO_E^*]=m$. For certain elements $a_1,\dots, a_m$, we can now write that
$\{a_1\OO_E^*\cup a_2\OO_E^*\cup  \cdots \cup a_8\OO_E^*\}=\Lambda^*$. According to Lemma \ref{units}
 there exists a constant $M$ such that
$$
|\psi(\OO_E^*)\cap B(R)|\leq M (log(R))^{n-1}.
$$
Lemma \ref{pulla} now gives us that there exists constants $M_1,\dots, M_8$ such that
$$
|\psi(a_i\OO_E^*)\cap B(R)|  \leq M log(M_i R)^{n-1}.
$$
As we suppose that $\Lambda^*$ is a union of  $a_i\OO_E^*$, we then have that
$$
|\psi(\Lambda^*)\cap B(R)| \leq \sum_{i=1}^8 M log(M_iR)^{(n-1)}\leq K log (R)^{n-1},
$$
where $K$ is  a constant independent of $R$. However, this is a contradiction against Proposition \ref{densityofunits}.
\end{proof}

\section{Discussion}
The algebraic results we achieved, while interesting, are likely not new. However, the route we used to achieve these results is 
surprising. In our derivation we started with the diversity multiplexing-gain bounds given by Zheng and Tse, which led to
some simple results concerning determinantial sums over matrix lattices and to statement that a unit group of an order is quite "dense". The density result was then applied to derive algebraic results of this group.While some steps where technical the only deep step was taken first. 

The lower bound for asymptotic error probability in the diversity-multiplexing gain tradeoff is coming from the outage probability of
the Rayleigh faded multiple antenna channel. What is needed here is the capacity expression for a MIMO channel and the knowledge of 
the probability density function of singular values of some random matrices.  The final statements of DMT are then gotten by cleverly choosing correct level of approximation that allows one to calculate needed probabilities, but which still gives us nontrivial information of the behavior of the error probabilities of codes in MIMO channel. 

It appears as a lucky accident that we can derive totally algebraic statement from such probabilistic results. 
It is likely that there exists a more direct and probably more effective way  to connect these two areas, but as now  the connection  appear as mystery.

\section*{Acknowledgement}
 The research of  R. Vehkalahti is supported by the Emil Aaltonen Foundation and by the Academy of
Finland (grant 131745). During the making of this paper he was visiting Professor Eva Bayer at \'Ecole polytechnique f\'ed\'erale de Lausanne.

\end{document}